\documentclass{article}
\usepackage[alg]{KLMM/klmm}

\usepackage[T1]{fontenc}
\usepackage[utf8]{inputenc}
\usepackage{lmodern}
\usepackage{mathtools}
\usepackage{enumitem }

\newcommand{\Gal}{\operatorname{Gal}}

\newcommand{\softO}{\widetilde O}

\newtheorem{conjecture}[theorem]{Conjecture}
\theoremstyle{definition}

\title{Complexity of Low-Degree Skew Polynomial Multiplication over Finite Fields}
\author{Ke Ye\institution{State Key Laboratory of Mathematical Sciences,
Academy of Mathematics and Systems Science, Chinese Academy of Sciences, and
University of Chinese Academy of Sciences. 
This work is supported by the Strategic Priority
Research Program of Chinese Academy of Sciences under Grant XDA0480502 and
XDA0480503.}
    \and Yichuan Cao\instref{1}
    \and Ruichen Qiu\instref{1}
}
\date{\today}

\begin{document}
\maketitle

\begin{abstract}
In this note, we study the complexity of multiplication in skew polynomial
rings over finite fields. We prove that the product of two elements in
$\mathbb{F}_{q^n}[x;\sigma]$ of degree at most $d < n$ can be computed using $\widetilde O(d^{\omega_K-1}n)$ arithmetic operations over $\mathbb{F}_q$, where $\sigma$ is the
$q$-Frobenius automorphism. This matches the
conjectural upper bound of Caruso--Le Borgne~[ISSAC'17] and
is quasi-optimal in view of the lower bound of Chen--Ye [ISSAC'24]. The proof reduces the finite-field case to the split algebra case using the equivariant multiplication theory of Couveignes–Ezome~[J.~Algebra, 2023], and then applies existing fast algorithms.
\end{abstract}

\section{Introduction}

Skew polynomial rings are rings of non-commutative polynomials introduced by Ore \cite{Ore1933}. They appear naturally throughout computational and non-commutative algebra. For example, these non-commutative rings provide algebraic models for rank-metric and Gabidulin-type codes \cite{BoucherUlmer2009,BoucherUlmer2014,PuchingerWachterZeh2016,PuchingerWachterZeh2018}, and interact with Gr\"{o}bner bases~\cite{LaScalaLevandovskyy2013} and structured matrix multiplication \cite{HuangYeGao2024}. From the algorithmic point of view, multiplication is the basic operation on which division, factorization, interpolation, and coding-theoretic procedures depend. The first fast algorithm for skew polynomial multiplication was discovered by Giesbrecht in 1998 \cite{Giesbrecht1998}. The known fastest algorithms include the general algorithms of Caruso--Le Borgne
\cite{CarusoLeBorgne2017}, sparse-support methods of Giesbrecht--Huang--Schost
\cite{GiesbrechtHuangSchost2020}, and special low-degree algorithms of Chen--Ye
\cite{ChenYe2024}. 

Throughout the paper, we use $\mathbb{K}$ for a field, and $\mathcal{A}/\mathbb{K}$ for an \'{e}tale $\mathbb{K}$-algebra. We also assume that there is a $\mathbb{K}$-linear automorphism $\sigma$ of $\mathcal{A}$ such that $\mathcal{A}^{\sigma} = \mathbb{K}$, and the order of $\sigma$ is equal to $\dim_{\mathbb{K}} \mathcal{A}$. Such a $\mathbb{K}$-algebra is called a $\langle \sigma \rangle$-Galois algebra. Two typical examples are $(\mathbb{K}^n, \tau)$ and $(\mathbb{L},\sigma)$, where $\mathbb{K}^n$ is the split $\mathbb{K}$-algebra and $\tau$ is the cyclic left shift operator sending $(a_1,\dots, a_n)$ to $(a_2,\dots, a_{n},a_1)$, $\mathbb{L}$ is a cyclic extension of $\mathbb{K}$ and $\sigma \in \Gal(\mathbb{L}/\mathbb{K})$ is a generator. 

The underlying additive group of the skew polynomial ring $\mathcal{A}[x;\sigma]$ is the ordinary polynomial ring $\mathcal{A}[x]$. Given $A, B \in \mathcal{A}[x;\sigma]$, we write $f = \sum_i a_i x^i$, $g = \sum_j b_j x^j$ for some $a_i, b_j \in \mathcal{A}$. Then the product $fg \in \mathcal{A}[x;\sigma]$ is defined as $fg = \sum_{k} c_k x^k$, where 
\begin{equation}\label{eq:coeff-formula}
        c_k \coloneqq \sum_{i+j=k} a_i\,\sigma^i(b_j).
\end{equation}
Given a positive integer $d$, we denote by $\mathcal{A}[x;\sigma]_{\le d}$ the subspace of $\mathcal{A}[x;\sigma]$ consisting of all skew polynomials of degree at most $d$. We consider the map
\[
\mu_d: \mathcal{A}[x;\sigma]_{\le d} \times \mathcal{A}[x;\sigma]_{\le d} \to \mathcal{A}[x;\sigma]_{\le 2d},\qquad (f,g) \mapsto fg.
\]
Let $C_\mathbb{K}(\mu_d)$ be the computational complexity of $\mu_d$. Concerning the value of $C_\mathbb{K}(\mu_d)$, we have the following conjecture.

\begin{conjecture}\cite{CarusoLeBorgne2017}\label{conj:1}
For any $\langle \sigma \rangle$-Galois algebra $\mathcal{A}/\mathbb{K}$ and positive integer $d$, we have $C_\mathbb{K}(\mu_d) = \softO\bigl(dn\min(d,n)^{\omega_{\mathbb{K}}-2}\bigr)$ where $n = \dim_{\mathbb{K}} \mathcal{A}$ and $\omega_\mathbb{K}$ is the exponential of the complexity of the matrix multiplication over $\mathbb{K}$.
\end{conjecture}

According to \cite{CarusoLeBorgne2017}, Conjecture~\ref{conj:1} is known to be true for $d \ge n$. Thus, it suffices to assume $d < n$. Moreover, it is proved in \cite{ChenYe2024} that $C_\mathbb{K}(\mu_d)$ is bounded below by $\Omega\bigl(dn\min(d,n)^{\omega_{\mathbb{K}}-2}\bigr)$, showing that any algorithm for $\mu_d$ achieving the conjectured upper bound is quasi-optimal. The goal of this note is to prove Conjecture~\ref{conj:1} for the case where $\mathcal{A}/\mathbb{K} = \mathbb{F}_{q^n}/\mathbb{F}_q$. 

\begin{theorem}\label{thm:finite-field}
Let $q$ be a prime power and let $n$ be a positive integer. If $\sigma$ is the Frobenius generator of $\Gal(\mathbb{F}_{q^n}/\mathbb{F}_q)$, then for any $0 < d < n$, we have
\[
C_\mathbb{K}(\mu_d) = \softO(d^{\omega_{\mathbb{K}}-1}n).
\]
\end{theorem}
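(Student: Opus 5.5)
The plan is to reduce the field case $(\mathbb{F}_{q^n},\sigma)$ to the split case $(\mathbb{K}^n,\tau)$, as the abstract suggests, and then use known fast algorithms for the split case. In the split case $\mathcal{A}=\mathbb{K}^n$ with the cyclic shift $\tau$, the ring $\mathbb{K}^n[x;\tau]$ is isomorphic to a ring of structured matrices. Concretely, it embeds into $M_n(\mathbb{K}[x^n])$, and a skew polynomial of degree at most $d$ becomes a matrix supported on $d+1$ cyclic diagonals. Multiplying two such banded circulant-type matrices amounts to, for each of $n/d$ blocks, multiplying $O(d)\times O(d)$ matrices. This costs $\softO((n/d)\cdot d^{\omega_{\mathbb{K}}})=\softO(d^{\omega_{\mathbb{K}}-1}n)$, and I would take this bound from the existing low-degree algorithms (Chen--Ye, or Caruso--Le Borgne's split-case algorithm) rather than reprove it.

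The reduction step uses the equivariant multiplication theory of Couveignes--Ezome. The key input is a $\langle\sigma\rangle$-equivariant bilinear algorithm for multiplication in $\mathbb{F}_{q^n}$ over $\mathbb{F}_q$. It is obtained by evaluating at the $\sigma$-orbit of a suitable point (or divisor) on a curve, or over the torus $\mathbb{G}_m$ with a normal basis. It realizes $\mathbb{F}_{q^n}$ as an equivariant quotient or subalgebra of a split-type algebra $\mathbb{F}_q^{N}$ with $N=O(n)$, on which $\sigma$ acts by a cyclic shift of orbits. I would write this as $\mathbb{F}_q$-linear maps $\iota:\mathbb{F}_{q^n}\to\mathcal{B}$ and $\pi:\mathcal{B}\to\mathbb{F}_{q^n}$ with $\pi(\iota(a)\iota(b))=ab$ and $\iota\circ\sigma=\tau\circ\iota$. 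The same intertwining should hold for $\pi$, at least on the relevant image. Here $\mathcal{B}$ is a product of copies of $\mathbb{F}_q^{n}$ (or of $\mathbb{F}_{q^k}^n$ with $k=O(1)$) with $\tau$ acting as the shift. Both maps must be computable in $\softO(n)$ operations, with the $\softO$ absorbing the cost of this change of representation.

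Given $\iota$ and $\pi$, I extend them coefficientwise to skew polynomials, $f=\sum a_ix^i\mapsto\sum\iota(a_i)x^i$. Equivariance gives
\[
\iota(a_i)\,\tau^i(\iota(b_j))=\iota(a_i)\,\iota(\sigma^i b_j),
\]
and applying $\pi$ yields $a_i\sigma^i(b_j)$. Because $\pi$ is linear, the sum over $i+j=k$ commutes with $\pi$. Hence $fg=\pi(\iota(f)\,\iota(g))$, where the inner product is computed in $\mathcal{B}[x;\tau]$. That ring decomposes as a constant number of copies of $\mathbb{F}_q^n[x;\tau]$, possibly after extending scalars to a bounded extension $\mathbb{F}_{q^k}$, which changes $\omega$ only by constant factors. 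The algorithm is: apply $\iota$ to the $2(d+1)$ coefficients, do $O(1)$ split-case products of cost $\softO(d^{\omega_{\mathbb{K}}-1}n)$ each, then apply $\pi$ to the $2d+1$ output coefficients. The conversions cost $O(d)\cdot\softO(n)$, which is dominated by the main term.

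The main obstacle is the middle step: obtaining an equivariant embedding with $N=O(n)$ or $\softO(n)$, and with \emph{both} $\iota$ and $\pi$ linear, quasi-linear-time computable, and $\sigma$-compatible. The naive choice $\mathbb{F}_{q^n}\otimes_{\mathbb{F}_q}\mathbb{F}_{q^n}\cong\mathbb{F}_{q^n}^n$ is equivariant but changes the base field, costing a factor of $n$. The Couveignes--Ezome construction exists precisely to avoid this loss, so I would cite their existence theorem for equivariant multiplication algorithms of linear length. I would also check that their evaluation and interpolation maps admit fast algorithms, via fast multipoint evaluation on the torus or a curve. If $q$ is small relative to $n$, I would first pass to a constant-degree extension $\mathbb{F}_{q^k}$ with $\gcd(k,n)=1$; this preserves the cyclic structure and costs only a constant factor.
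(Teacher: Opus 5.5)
Your reduction is the paper's. You extend the Couveignes--Ezome maps coefficientwise and use equivariance of the encoding map together with linearity of the decoding map to get $c_k=B\bigl((S_{k,t})_{t}\bigr)$, which is exactly Lemma~\ref{lem:coeff-bridge}. You then run a split-case algorithm on each copy. One step you leave open has a simple resolution, and the degree range needs a case split.

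\textbf{Cost of $\iota$ and $\pi$.} You flag quasi-linear computability of $\iota,\pi$ as the main obstacle and propose to verify it by fast evaluation/interpolation on a curve or torus. That would be an unproved step; fast evaluation and interpolation on a general curve are not available off the shelf. The missing observation is that no geometry is needed. In Couveignes--Ezome both $T$ and $B$ are $\mathbb{K}[G]$-linear. After choosing normal bases, $\mathbb{F}_{q^n}$ and each copy of $\mathcal{E}$ become the regular module $\mathbb{K}[G]\cong\mathbb{K}[y]/(y^n-1)$. Each component $T_t$, and each $B_t$ (the restriction of $B$ to the $t$-th copy), is therefore multiplication by a fixed element of $\mathbb{K}[G]$, i.e.\ a cyclic convolution computable in $\softO(n)$. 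So equivariance of $\pi$ is not needed for correctness, since your identity uses only equivariance of $\iota$. It is, however, exactly what gives the cost bound.

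Two smaller points. The number of copies is $s=\nu_q^{\mathrm{sym}}(n)\le C\lceil\log_q n\rceil$, not $O(1)$ and not ``linear length''; this is harmless inside $\softO$. Your passage to an extension $\mathbb{F}_{q^k}$ for small $q$ is unnecessary.

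\textbf{Range of $d$.} Theorem~\ref{src:CY} gives $\softO(d^{\omega_{\mathbb{K}}-1}n)$ for the split algebra only when $d<n/3$, so citing Chen--Ye does not cover all $0<d<n$. Caruso--Le Borgne's bound $\softO(D^{\omega_{\mathbb{K}}-2}n^2)$ matches the target only when $d=\Theta(n)$. The paper therefore handles $n/3\le d<n$ without any reduction:
\begin{itemize}
\item for $n/3\le d<n/2$, apply Theorem~\ref{src:CLB}\ref{src:CLB:item1} with $D=2d$ and use $n\le 3d$;
\item for $n/2\le d<n$, bound $C_{\mathbb{K}}(\mu_d)\le C_{\mathbb{K}}(\mu_n)$ and apply Theorem~\ref{src:CLB}\ref{src:CLB:item2} using $n\le 2d$.
\end{itemize}
Your banded-matrix sketch, or Caruso--Le Borgne applied to the split algebra, would also cover this range. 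It should be stated as a separate case rather than attributed to the low-degree algorithm.
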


The main results presented in this paper are obtained through an interaction between the authors and an artificial intelligence agent system, \textit{MechMath Agent Team (MMAT)}~\cite{MMAT}.
The authors assume full responsibility for the paper’s content.

\section{Preliminaries}\label{sec:prelim}

In this section, we record some ingredients that are required for the proof of Theorem~\ref{thm:finite-field}.

\begin{theorem}\label{src:CLB}
Let $\mathcal{A}/\mathbb{K}$ be an $n$-dimensional $\langle \sigma \rangle$-Galois algebra over $\mathbb{K}$. Given $f,g \in \mathcal{A}[x;\sigma]$, we have  
\begin{enumerate}[label=(\alph*),leftmargin=2.2em]
\item If $\deg f + \deg g \leq D < n$, then $fg$ can be computed in $\softO(D^{\omega_{\mathbb{K}}-2}n^2)$ operations over $\mathbb{K}$. \label{src:CLB:item1}
\item For any $d \geq n$, we have $C_{\mathbb{K}}(\mu_d) = \softO(d n^{\omega_{\mathbb{K}}-1})$. \label{src:CLB:item2}
\end{enumerate}
\end{theorem}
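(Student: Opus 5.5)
Both items come from \cite{CarusoLeBorgne2017}. My plan is to re-derive them from one structural fact. For a $\langle\sigma\rangle$-Galois algebra, Dedekind--Artin independence gives a $\mathbb{K}$-algebra isomorphism
\[
\mathcal{A}[x;\sigma]/(x^n-1)\;\cong\;\operatorname{End}_\mathbb{K}(\mathcal{A}),\qquad \textstyle\sum_i a_ix^i\mapsto \bigl(\beta\mapsto \sum_i a_i\sigma^i(\beta)\bigr).
\]
More generally, $\mathcal{A}[x;\sigma]$ is an Azumaya algebra of rank $n^2$ over its centre $\mathbb{K}[x^n]$. Item (b) uses this globally; item (a) uses its truncated form together with evaluation and interpolation.

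\textbf{Item (a).} Write $L_h(\beta)=\sum_k h_k\sigma^k(\beta)$, so that $L_{fg}=L_f\circ L_g$. Fix a normal element $\alpha\in\mathcal{A}$ and put $\alpha_l=\sigma^l(\alpha)$ for $0\le l\le D$. Since $\deg(fg)\le D<n$, the product $fg$ is determined by the $D+1$ values $L_{fg}(\alpha_l)$. Indeed, the $(D+1)\times(D+1)$ Moore-type matrix $\bigl(\sigma^{k+l}(\alpha)\bigr)_{k,l}$ over $\mathcal{A}$ is invertible because $\alpha$ is normal. The algorithm has four steps:
\begin{enumerate}[label=(\roman*)]
\item compute $\beta_l=L_g(\alpha_l)$ for $0\le l\le D$;
\item compute $\gamma_l=L_f(\beta_l)$;
\item solve $\sum_k c_k\sigma^{k+l}(\alpha)=\gamma_l$ for the coefficients $c_k$;
\item check that the solution is $fg$.
\end{enumerate}
Step (iii) is a linear system of size $D+1$ over $\mathcal{A}$. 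It costs $\softO(D^{\omega_\mathbb{K}}n)$ operations over $\mathbb{K}$, which is at most $\softO(D^{\omega_\mathbb{K}-2}n^2)$ since $D<n$.

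\textbf{Main obstacle: the evaluations (i) and (ii).} For (ii) the arguments $\beta_l$ are arbitrary, so this step is genuinely a multipoint evaluation of a skew polynomial of degree at most $D$ at $D+1$ points. In normal-basis coordinates each $\sigma^k$ is a cyclic shift and costs nothing. The real cost is the $\mathbb{K}$-linear maps $\beta\mapsto a\beta$, which are $n\times n$ matrices.

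My plan is to batch these products into rectangular matrix products. I would write
\[
\sum_k a_k\sigma^k(\beta_l)=\sum_k \sigma^{k}\bigl(\sigma^{-k}(a_k)\,\beta_l\bigr),
\]
fix a basis of $\mathcal{A}$, and view the map $(a_k)_k\mapsto(\text{values at }\beta_l)_l$ as an $n(D+1)\times n(D+1)$ block operator with Toeplitz-like structure. This should give products of shape $(n\times n)\cdot(n\times D)$, repeated $n/D$ times in a block-cyclic pattern, each costing $n\,D^{\omega_\mathbb{K}-2}\cdot n$ after splitting into $D\times D$ blocks. The total is then $\softO(D^{\omega_\mathbb{K}-2}n^2)$. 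If this bookkeeping fails, my fallback is Caruso--Le Borgne's original device: use a Hilbert~90 element to pass to a commutative evaluation model in which multiplication by $\mathcal{A}$ is diagonal.

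\textbf{Item (b).} Here $d\ge n$. I would use the Azumaya structure over the centre $\mathbb{K}[x^n]$: after extending scalars along the centre, $\mathcal{A}[x;\sigma]$ becomes isomorphic to $M_n(\mathbb{K}[y])$ with $y=x^n$, up to an explicit twist. This isomorphism and its inverse can be applied in softly linear time per coefficient block, for instance by the reduction modulo $x^n-a$ with $a\in\mathbb{K}^\times$ used in \cite{CarusoLeBorgne2017}. An element of degree at most $d$ becomes an $n\times n$ matrix of polynomials in $y$ of degree $O(d/n)$. Multiplying two such matrices costs
\[
\softO\bigl(n^{\omega_\mathbb{K}}\cdot d/n\bigr)=\softO\bigl(dn^{\omega_\mathbb{K}-1}\bigr).
\]
Enough evaluation points for the degree $O(d/n)$ in $y$ can be obtained by a small field extension if $\mathbb{K}$ is small, at polylogarithmic cost. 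This gives $C_\mathbb{K}(\mu_d)=\softO(dn^{\omega_\mathbb{K}-1})$.
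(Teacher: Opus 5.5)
The paper gives no proof of this statement. Both items are quoted from Caruso--Le Borgne and used as black boxes, so your re-derivation has to stand on its own, and for item (a) it does not. The only expensive step, (ii), is left as a plan whose arithmetic does not close. By your own count, $n/D$ products of shape $(n\times n)\cdot(n\times D)$ at $n^2D^{\omega_{\mathbb{K}}-2}$ each total $n^3D^{\omega_{\mathbb{K}}-3}$. That is a factor $n/D$ above the target. Naive evaluation (up to $(D+1)^2$ products in $\mathbb{L}$) does not rescue it either: it costs $\softO(D^2n)$, which is too much once $D^{4-\omega_{\mathbb{K}}}>n$.

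Step (iii) is mis-costed. The inequality $D^{\omega_{\mathbb{K}}}n\le D^{\omega_{\mathbb{K}}-2}n^2$ holds only when $D^2\le n$, whereas the paper invokes (a) with $D=2d\ge 2n/3$. Moreover, invertibility of $(\sigma^{k+l}(\alpha))_{k,l}$ needs $\mathcal{A}$ to be a field. In $\mathbb{K}^n$ with the cyclic shift, $e_1$ is normal, yet for $D\le n-2$ some nonzero constant $e_k$ annihilates $e_1,\sigma(e_1),\dots,\sigma^D(e_1)$. So your argument covers only $\mathcal{A}=\mathbb{L}$, which is all the paper needs.

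In the field case, the missing idea is to use $\mathbb{K}$-linearity instead of evaluating at arbitrary points. Write $\beta_l=\sum_{m<n}\beta_{l,m}\sigma^m(\alpha)$ with $\beta_{l,m}\in\mathbb{K}$; then $L_f(\beta_l)=\sum_m\beta_{l,m}L_f(\sigma^m\alpha)$. The $n$ values $L_f(\sigma^m\alpha)=\sum_i a_i\sigma^{i+m}(\alpha)$ form one Hankel matrix--vector product over $\mathbb{L}$, costing $\softO(n^2)$ over $\mathbb{K}$. The change to normal coordinates and step (ii) are then each a single $(n\times n)\cdot(n\times(D+1))$ product over $\mathbb{K}$, costing $\softO(D^{\omega_{\mathbb{K}}-2}n^2)$. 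Steps (i) and (iii) are a Hankel product and a nonsingular Hankel solve over $\mathbb{L}$. With structured algorithms instead of dense elimination they cost $\softO(Dn)$.

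In item (b), the structural claim is false. $\mathcal{A}[x;\sigma]$ is not Azumaya over $\mathbb{K}[x^n]$: its fibre at $x^n=0$ is $\mathcal{A}[x;\sigma]/(x^n)$, in which the two-sided ideal generated by $x$ is nilpotent. For $\mathcal{A}=\mathbb{L}$ there is not even a $\mathbb{K}[y]$-algebra embedding into $M_n(\mathbb{K}[y])$. Indeed, $\mathbb{L}[x;\sigma]$ is a domain, so $\mathbb{L}[x;\sigma]\otimes_{\mathbb{K}[y]}\mathbb{K}(y)$ is a division algebra of dimension $n^2$ and cannot be $M_n(\mathbb{K}(y))$. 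Hence there is no global matrix picture in which a degree-$d$ element ``becomes'' an $n\times n$ matrix of degree-$O(d/n)$ polynomials.

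What does work is the route you mention only in passing. Reduce modulo the central polynomials $x^n-a_i$ for more than $2d/n$ distinct norms $a_i=N(c_i)\in\mathbb{K}^\times$. The substitution $x\mapsto c_ix$ identifies $\mathcal{A}[x;\sigma]/(x^n-a_i)$ with $\mathcal{A}[x;\sigma]/(x^n-1)\cong\operatorname{End}_{\mathbb{K}}(\mathcal{A})$. Multiply in $M_n(\mathbb{K})$ and recombine by the Chinese remainder theorem over the centre. The conversions cost $\softO(n^{\omega_{\mathbb{K}}})$ per fibre, not softly linear time as you claim, but this still yields $\softO(dn^{\omega_{\mathbb{K}}-1})$.
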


\begin{theorem}\cite{ChenYe2024}\label{src:CY}
Let $\mathcal{A}/\mathbb{K}$ be the $n$-dimensional split $\mathbb{K}$-algebra with the cyclic left shift $\tau$ sending $(a_1,\ldots,a_r)$ to $(a_2,\ldots,a_r,a_1)$. For any $d < n/3$, we have $C_{\mathbb{K}}(\mu_d) = \softO(d^{\omega_{\mathbb{K}}-1}n)$.
\end{theorem}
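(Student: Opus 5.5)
The statement to prove is Theorem~\ref{src:CY}: for the split algebra $\mathbb{K}^n$ with the cyclic shift $\tau$ and $d<n/3$, skew multiplication of degree-$d$ operands costs $\softO(d^{\omega_{\mathbb{K}}-1}n)$. The plan is to turn the skew product into a small number of structured matrix products of size about $d\times n$ by $n\times d$.

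\textbf{Setup.} Write $f=\sum_{i\le d} a_i x^i$ and $g=\sum_{j\le d} b_j x^j$, with $a_i=(a_{i,1},\dots,a_{i,n})$ and $b_j=(b_{j,1},\dots,b_{j,n})$. Indices in $\mathbb{Z}/n$ and $\tau^i(b)_t=b_{t+i}$. By \eqref{eq:coeff-formula}, the $t$-th coordinate of $c_k$ is
\[
c_{k,t}=\sum_{i+j=k} a_{i,t}\,b_{j,t+i}.
\]
Reindex with $s=t+i$ and the diagonal vectors $\alpha_s(i)=a_{i,s-i}$ and $\beta_s(j)=b_{j,s}$. Then $c_{k,s-i}$ collects the terms $\alpha_s(i)\beta_s(k-i)$.

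\textbf{Step 1: structure of $fg$.} View $fg$ as a matrix $M$ of size $n\times(2d+1)$ with $M[t,k]=c_{k,t}$. Each entry is a sum of at most $d+1$ products. The contributions attached to a fixed anchor position $s$ form the ``outer convolution'' pattern $(i,j)\mapsto \alpha_s(i)\beta_s(j)$, placed at row $s-i$ and column $i+j$. So $M$ is a sum over $s$ of rank-one-like patches of size $(d+1)\times(d+1)$, each sheared along a diagonal.

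\textbf{Step 2: blocking.} Partition $\mathbb{Z}/n$ into about $n/d$ blocks of consecutive positions of length $d$. The hypothesis $d<n/3$ ensures that the wrap-around touches at most three consecutive blocks, so block interactions stay local. For a block pair $(B,B')$, the needed computation is an evaluation of the form
\[
\sum_{i} a_{i,t}\,b_{k-i,\,t+i}, \qquad t\in B,\ t+i\in B'.
\]
This is a bilinear map on $d\times d$ data. The plan is to reduce it, via evaluation/interpolation in the $x$-direction (a polynomial multiplication in $k$ of cost $\softO(d)$ per row), to $O(1)$ products of $d\times d$ matrices. Each block pair then costs $\softO(d^{\omega_{\mathbb{K}}})$. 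With $O(n/d)$ block pairs, the total is $\softO(d^{\omega_{\mathbb{K}}-1}n)$, as claimed.

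\textbf{Main obstacle.} The hard step is the per-block reduction: showing that the sheared, convolution-plus-diagonal bilinear map on a $d\times d$ block can be computed with $\softO(d^{\omega_{\mathbb{K}}})$ operations. The first thing to try is to treat it as the skew multiplication for a $d$-dimensional piece. This fits Theorem~\ref{src:CLB}\ref{src:CLB:item2} in the regime $d\ge$ dimension, giving $\softO(d\cdot d^{\omega_{\mathbb{K}}-1})$. Concretely, restrict to positions in a window of length about $3d$ and use the fact that on this window the shift $\tau$ acts like the shift of a cyclic algebra $\mathbb{K}^{3d}$, since no wrap-around is seen for degree $\le d$. One then computes in $(\mathbb{K}^{3d},\tau)$ with Theorem~\ref{src:CLB}(b) and discards the boundary contamination by overlapping the windows. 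The point to verify carefully is that truncating to a window commutes with the skew product on the coordinates kept, which holds exactly when those coordinates have distance at least $d$ from the window edges.
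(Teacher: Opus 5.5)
The paper does not prove this statement; it is quoted from \cite{ChenYe2024}. So any argument is a different route. Yours works, provided you drop Step~2's unsupported ``evaluation/interpolation'' reduction and let the windowing in your last paragraph carry the proof. Three points need to be made explicit there.

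\begin{enumerate}
\item Since $c_{k,t}=\sum_{i+j=k}a_{i,t}b_{j,t+i}$ reads only $a_{\cdot,t}$ and $b_{\cdot,t},\dots,b_{\cdot,t+d}$, the product computed in $\mathbb{K}^{L}$ (with its cyclic left shift) on a window of $L$ consecutive residues is correct at every position at distance at least $d$ from the \emph{right} end of the window. The left margin is irrelevant, so your ``exactly when'' overstates the condition.
\item The window must consist of distinct residues, i.e.\ $L\le n$. For $L=3d$ this is exactly where $d<n/3$ is used.
\item Theorem~\ref{src:CLB}\ref{src:CLB:item2} requires degree at least the dimension, which fails for degree $d$ on a window of dimension $3d$. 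As in Case~III of the paper, regard the operands as having degree $\le L$. This costs $\softO(L^{\omega_{\mathbb{K}}})=\softO(d^{\omega_{\mathbb{K}}})$ per window, so $\lceil n/d\rceil$ windows give $\softO(d^{\omega_{\mathbb{K}}-1}n)$.
\end{enumerate}

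The citation is brief but opaque. Your reduction shows that the split low-degree case follows formally from the $d\ge n$ case of Caruso--Le Borgne, purely because $\tau$ is local. With a one-sided margin, $L=2d$, it even covers $d<n/2$.

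There is an even more elementary variant, which is what your Step~2 was reaching for, and it avoids Theorem~\ref{src:CLB} entirely. Set $(M_f)_{t,t+i}=a_{i,t}$, with indices mod $n$. The inequality $2d<n$ keeps the $2d+1$ diagonals distinct and gives $M_fM_g=M_{fg}$. Cut $\mathbb{Z}/n$ into about $n/d$ consecutive blocks of sizes between $d$ and $2d$. Then $M_f$ and $M_g$ are cyclically block-bidiagonal, so the product needs $O(n/d)$ multiplications of $O(d)\times O(d)$ matrices. The $x$-degree is recorded by the diagonal offset, so no evaluation/interpolation is needed.
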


\begin{theorem}\cite{CouveignesEzome2022}\label{src:CE}
For any prime power $q$ and positive integer $n$, the multiplication map
$\mathbb{F}_{q^n} \times \mathbb{F}_{q^n} \to \mathbb{F}_{q^n}$ has symmetric $G$-equivariant complexity $\nu_q^{\mathrm{sym}}(n) \leq C \lceil \log_q n\rceil$
for some absolute constant $C > 0$. Here $G \coloneqq \Gal(\mathbb{F}_{q^n}/\mathbb{F}_{q})$.
\end{theorem}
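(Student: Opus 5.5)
The statement immediately above is Theorem~\ref{src:CE}, which is cited from \cite{CouveignesEzome2022}. If one wanted to re-derive it rather than quote it, I would follow an equivariant Chudnovsky-type interpolation strategy. Here "symmetric $G$-equivariant complexity" means writing the multiplication tensor as a sum of $G$-orbits of symmetric rank-one tensors $\lambda(x)\lambda(y)\,a$. In other words, we want linear forms $\lambda_1,\dots,\lambda_r$ on $\mathbb{F}_{q^n}$ and elements $a_1,\dots,a_r$ with
\[
xy=\sum_{k=1}^{r}\sum_{\tau\in G}\lambda_k(\tau x)\,\lambda_k(\tau y)\,\tau^{-1}(a_k),
\]
and the goal is $r=O(\lceil\log_q n\rceil)$. (In the extended setting where the $\lambda_k$ take values in small extensions, the cost of each extension multiplication is also counted.)

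\emph{Geometric model.} First I realize $\mathbb{F}_{q^n}$ with Frobenius acting as a translation. I choose an elliptic curve $E'/\mathbb{F}_q$ and a cyclic $\mathbb{F}_q$-isogeny $\phi\colon E'\to E$ of degree $n$ whose kernel $T\cong\mathbb{Z}/n$ consists of rational points. I also need a rational point $P\in E(\mathbb{F}_q)$ whose fibre $\phi^{-1}(P)$ is a single closed point of degree $n$, so that its residue field is $\mathbb{F}_{q^n}$. On that fibre, the Frobenius acts as translation by a generator $t\in T$. Lang's theorem and the standard counting of isogenous curves should give such data, possibly after replacing $n$ by a slightly larger admissible value or using a small auxiliary base field. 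Then $G$ corresponds to the translation group $T$, and any $T$-invariant divisor $D$ makes the Riemann--Roch space $L(D)$ a $G$-module.

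\emph{Interpolation.} Next I take $D$ to be $T$-invariant of degree about $n$, so that evaluation $L(D)\to\mathbb{F}_{q^n}$ at the fibre is surjective. Given $x,y$, I lift them equivariantly to $f_x,f_y\in L(D)$, compute $f_xf_y\in L(2D)$, and reconstruct it from its values at a $T$-stable set $S$ of points with $|S|>\deg 2D$. Evaluating at the fibre then returns $xy$. Each $T$-orbit in $S$ contributes exactly one orbit of rank-one tensors. Since $\deg 2D\approx 2n$ and orbits have size $n$, only $O(1)$ orbits of rational points are needed when $E'(\mathbb{F}_q)$ is large enough. When $q$ is small, I would use orbits of points of degree $m\approx\lceil\log_q n\rceil$, whose number grows like $q^m/m$. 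Each such point contributes a multiplication in $\mathbb{F}_{q^m}$, and a symmetric bilinear algorithm for that multiplication costs $O(m)$ if one recurses or uses a Chudnovsky bound. Altogether this gives $O(\lceil\log_q n\rceil)$ orbits.

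\emph{Main obstacle.} The hardest step is the existence statement in the first step: finding, for every $q$ and $n$, a curve with a rational cyclic kernel of order $n$ and an inert fibre of degree $n$. This may be impossible directly, for example when $n$ exceeds the Hasse bound $q+1+2\sqrt q$. I would first try to handle this by decomposing $n$ into factors, or by embedding $\mathbb{F}_{q^n}$ into $\mathbb{F}_{q^{N}}$ with $N$ comparable to $n$, and descending the equivariant formula by a trace argument. The second difficulty is ensuring that the lifts and the interpolation are genuinely symmetric and $G$-equivariant. I would handle this by choosing a $T$-stable complement of the kernel of evaluation in $L(D)$, which exists in the coprime case by Maschke's theorem and otherwise needs an explicit normal-basis construction.
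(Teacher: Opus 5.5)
The paper gives no argument for this statement; it quotes it from Couveignes--Ezome. Judged on its own, your sketch has a genuine gap, and it is exactly the step you flag. Your model needs $T\cong\mathbb{Z}/n$ inside $E'(\mathbb{F}_q)$, which forces $n\le(\sqrt q+1)^2$. This is intrinsic: any geometric unramified cyclic cover of a genus-one curve $E/\mathbb{F}_q$ has Galois group a quotient of $E(\mathbb{F}_q)$.

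In that range your construction is sound. The connecting map $\delta\colon E(\mathbb{F}_q)\to H^1(\mathbb{F}_q,T)=T$ is onto, so an inert $P$ exists. Taking $D=\phi^*D_0$ with $\deg D_0=2$ makes evaluation surjective. The equivariant lift and the equivariant extension defining $B$ exist because $\mathbb{F}_{q^n}\cong\mathbb{F}_q[G]$ is free, hence projective and injective, so Maschke is not needed. But in that range $\lceil\log_q n\rceil\le 3$. The whole content of the logarithmic bound is therefore the regime $n\gg q$, and none of your remedies reaches it:
\begin{itemize}
\item Factoring $n$ gives $\nu_q^{\mathrm{sym}}(n_1n_2)\le\nu_q^{\mathrm{sym}}(n_1)\,\nu_q^{\mathrm{sym}}(n_2)$ only for coprime factors (tensor the two algorithms), and it is useless for prime $n$.
\item Passing to $\mathbb{F}_{q^N}$ with $n\mid N$ does descend: restrict $T$ and $B$ to the invariants of $\langle\sigma^n\rangle$. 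No trace is needed, which matters when $p\mid N/n$. But this replaces $n$ by a larger $N$ and only worsens the Hasse obstruction.
\item Interpolation points of degree $m$ address the supply of evaluation points, not the existence of the cover. Moreover, an orbit of them covers degree $mn$ at cost $O(m)$, so they cannot be the source of a $\log_q n$ factor.
\end{itemize}

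The missing idea is to enlarge the \emph{base} field and then descend. Your aside about an auxiliary base field points this way, but the descent is where the logarithm comes from.
\begin{enumerate}
\item Put $Q=q^k$ with $k=\lceil 2\log_q(n+1)\rceil$. For $n\ge 7$ the Hasse interval of $\mathbb{F}_Q$ contains a multiple $N\ge 7n$ of $n$ with $p\nmid Q+1-N$. By Waterhouse and R\"uck there is $E'/\mathbb{F}_Q$ with $E'(\mathbb{F}_Q)\cong\mathbb{Z}/N$. Let $T$ be its subgroup of order $n$ and $\phi\colon E'\to E=E'/T$.
\item The $G$-Galois algebra $\mathcal{A}=\mathbb{F}_{q^n}\otimes_{\mathbb{F}_q}\mathbb{F}_Q$ is in general not a field, so drop the inert-fibre requirement. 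It is the fibre of $\phi$ over any $P\in E(\mathbb{F}_Q)$ whose class $\delta(P)\in T\cong G$ equals the class of $\mathcal{A}$ in $H^1(\mathbb{F}_Q,G)=G$, i.e. the power $\sigma^{\pm k}$ through which $\mathrm{Frob}_Q$ acts. Such $P$ exists since $\delta$ is onto.
\item Run your interpolation with $\deg D_0=2$ and five totally split fibres (there are $N/n\ge 7$ fibres over $\phi(E'(\mathbb{F}_Q))$). This gives an $\mathbb{F}_Q[G]$-equivariant symmetric algorithm for $\mathcal{A}$ with $5$ orbits.
\item Descend to $\mathbb{F}_q$. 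Precompose with $u\mapsto u\otimes 1$. Replace each coordinatewise product in $\mathbb{F}_Q$ by a symmetric bilinear algorithm of $\mathbb{F}_Q/\mathbb{F}_q$ of length $\mu_q^{\mathrm{sym}}(k)$; applied coordinatewise, it commutes with the cyclic shift. Postcompose with $\mathrm{id}\otimes\ell$ for any $\mathbb{F}_q$-linear $\ell\colon\mathbb{F}_Q\to\mathbb{F}_q$ with $\ell(1)=1$.
\end{enumerate}
This yields $\nu_q^{\mathrm{sym}}(n)\le 5\,\mu_q^{\mathrm{sym}}(k)=O(\lceil\log_q n\rceil)$, using the uniform linear bound $\mu_q^{\mathrm{sym}}(k)=O(k)$ of Chudnovsky--Chudnovsky and Shparlinski--Tsfasman--Vladut.
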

Let $\mathcal{E}$ be the $n$-dimensional split $\mathbb{F}_q$-algebra. By definition, $\nu_q^{\mathrm{sym}}(n) = s$ is equivalent to saying that there are $\mathbb{K}[G]$-linear maps
\[
T:\mathbb{F}_{q^n} \longrightarrow \mathcal E^s,\qquad B:\mathcal E^s\longrightarrow \mathbb{F}_{q^n}
\]
such that
\begin{equation}\label{eq:CE-factorization}
        uv=B\bigl(T(u)\diamond_s
        T(v)\bigr), \qquad u,v\in \mathbb{L},
\end{equation}
where $\diamond_s$ means componentwise multiplication in each of the $s$ copies of
$\mathcal E$. 

\section{Proof of Theorem~\ref{thm:finite-field}}\label{sec:finite-main}

In this section, we denote $\mathbb{K} \coloneqq \mathbb F_q$ and $\mathbb{L} \coloneqq \mathbb F_{q^n}$. Then $G \coloneqq \Gal(\mathbb{L}/\mathbb{K}) = \langle \sigma \rangle$, where $\sigma$ is the Frobenius automorphism. We also denote by $\mathcal{E}$ the $n$-dimensional split $\mathbb{K}$-algebra. Let $T$ be the map defined in \eqref{eq:CE-factorization} and let $\tau$ be the automorphism of $\mathcal{E}$ such that 
\begin{equation}\label{eq:orientation}
        T(\sigma z)=\tau T(z),\qquad z\in \mathbb{L}.
\end{equation}
It is straightforward to verify that $\tau$ is the inverse of the cyclic left shift operator. Clearly, Theorem~\ref{src:CY} still holds for this choice of $\tau$.

Given $f =\sum_{i=0}^{d} a_i x^i$ and $g =\sum_{j=0}^{d} b_j x^j$ in $\mathbb{L}[x;\sigma]$, we write $T(a_i)=(F_{i,t})_{t=1}^s$ and $T(b_j)=(G_{j,t})_{t=1}^s$ with $F_{i,t}, G_{j,t}\in\mathcal E$. For each $1\le t\le s$, we define skew polynomials $P_t=\sum_{i=0}^d F_{i,t}x^i$ and $Q_t =\sum_{j=0}^d G_{j,t}x^j$ in $\mathcal E[x;\tau]$, where $xe=\tau(e)x$ for $e\in\mathcal E$. We denote by $S_{k,t}$ the
coefficient of $x^k$ in the product $P_tQ_t$. 

\begin{lemma}\label{lem:coeff-bridge}
For each $0\leq k\leq 2d$, the coefficient $c_k$ of
$x^k$ in $fg$ is
\begin{equation}\label{eq:bridge-output}
        c_k=B \bigl((S_{k,t})_{t=1}^s\bigr).
\end{equation}
\end{lemma}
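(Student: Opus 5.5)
The plan is a direct coefficient comparison that uses the $G$-equivariance of $T$ together with the linearity of $B$ and of $T$. By \eqref{eq:coeff-formula}, $c_k=\sum_{i+j=k} a_i\,\sigma^i(b_j)$. Similarly, in $\mathcal E[x;\tau]$ the coefficient of $x^k$ in $P_tQ_t$ is
\[
S_{k,t}=\sum_{i+j=k} F_{i,t}\,\tau^i(G_{j,t}),
\]
where we sum over $0\le i,j\le d$. This formula holds because the multiplication rule $xe=\tau(e)x$ gives the same convolution formula with $\tau$ in place of $\sigma$.

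First, use \eqref{eq:orientation}, iterated $i$ times, to obtain $T(\sigma^i z)=\tau^i T(z)$ for all $z\in\mathbb{L}$. Here $\tau$ acts diagonally on the $s$ copies of $\mathcal E$; this is what $\mathbb{K}[G]$-linearity of $T$ means, with $\sigma$ acting on $\mathcal E^s$ through $\tau$ on each factor. Consequently, the $t$-th component of $T(\sigma^i(b_j))$ is $\tau^i(G_{j,t})$. Next, apply the factorization \eqref{eq:CE-factorization} with $u=a_i$ and $v=\sigma^i(b_j)$:
\[
a_i\,\sigma^i(b_j)=B\bigl(T(a_i)\diamond_s T(\sigma^i b_j)\bigr)=B\bigl((F_{i,t}\,\tau^i(G_{j,t}))_{t=1}^s\bigr).
\]
Finally, sum over all pairs $i+j=k$ and use the $\mathbb{K}$-linearity of $B$ (implied by $\mathbb{K}[G]$-linearity) to move the sum inside $B$. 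The $t$-th component of the resulting argument is exactly $S_{k,t}$, which gives \eqref{eq:bridge-output}.

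The only delicate step is the compatibility of the twist. We must make sure that the automorphism $\tau$ on $\mathcal E$ used to define $\mathcal E[x;\tau]$ is the same one intertwined by $T$ in \eqref{eq:orientation}, and that it acts the same way on every one of the $s$ copies. The paper fixes $\tau$ precisely so that this holds, so the identity $T(\sigma^i b_j)=\tau^i T(b_j)$ follows by induction on $i$ directly from \eqref{eq:orientation}. A further point to note is that we apply \eqref{eq:CE-factorization} termwise, before summing; this is necessary because $B(x\diamond_s y)$ is bilinear only in the pair $(x,y)$, not linear in the product. Summing after $B$ is valid by linearity of $B$. Once these points are in place the argument is a routine calculation.
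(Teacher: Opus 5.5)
Your proposal is correct and follows essentially the same route as the paper. Both expand $c_k$ via \eqref{eq:coeff-formula}, apply \eqref{eq:CE-factorization} termwise to each $a_i\sigma^i(b_j)$, use $T(\sigma^i b_j)=\tau^i T(b_j)$ (with $\tau$ acting diagonally on the $s$ copies) from \eqref{eq:orientation}, and then pull the sum inside $B$ by $\mathbb{K}$-linearity to obtain $S_{k,t}$ componentwise.
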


\begin{proof}
The $k$-th coefficient $c_k$ of $AB$ is given in \eqref{eq:coeff-formula}. By \eqref{eq:CE-factorization}, we have $ a_i\sigma^i(b_j) = B\bigl(T(a_i)\diamond_s T(\sigma^i b_j)\bigr)$.
According to \eqref{eq:orientation}, we deduce that $T(\sigma^i b_j)=\tau^i T(b_j) =\bigl(\tau^i G_{j,t}\bigr)_{t=1}^s$, from which we conclude that 
\[
        T(a_i)\diamond_s T(\sigma^i b_j)
        =\bigl(F_{i,t}\diamond \tau^i(G_{j,t})\bigr)_{t=1}^s.
\]
Since $B$ is $\mathbb{K}$-linear, this implies 
\[
c_k =\sum_{i+j=k}B\bigl((F_{i,t}\diamond \tau^i(G_{j,t}))_{t=1}^s\bigr) = B \Big( \Big(\sum_{i+j=k}F_{i,t}\diamond \tau^i(G_{j,t})\Big)_{t=1}^s\Big) = B( (S_{k,t})^s_{t=1}).\qedhere
\] 
\end{proof}

Now we are ready to prove Theorem~\ref{thm:finite-field}.

\begin{proof}[Proof of Theorem~\ref{thm:finite-field}]
Denote $s \coloneqq \nu_q^{\mathrm{sym}}(n)$. By Theorem~\ref{src:CE}, we have $s=\softO(1)$. The rest of the proof is divided into three cases.

\smallskip
\noindent\textbf{Case I: $1\le d<n/3$.}
Suppose that $T$ and $B$ are maps defined in \eqref{eq:CE-factorization}. By choosing normal $\mathbb{K}[G]$-module bases, both $\mathbb{F}_{q^n}$ and $\mathcal{E}$ are identified with the regular module $\mathbb{K}[G]$. If we write $T = (T_1,\dots, T_s)$, then each component
$T_t: \mathbb{F}_{q^n} \to \mathcal{E}$ is identified with the convolution by some fixed element of $\mathbb{K}[G]$. On the other hand, we may write $B = \sum_{t=1}^s B_t$ where $B_t: \mathcal{E} \to \mathbb{F}_{q^n}$ is the composition of $B$ with the natural inclusion $\mathcal{E} \hookrightarrow \mathcal{E}^s$ into the $t$-th component. We may identify each $B_t$ with the convolution by some fixed element of $\mathbb{K}[G]$. Consequently, both $T$ and $B$ can be computed by $\softO(n)$ operations as $s=\softO(1)$.

We count the number of operations required in Lemma~\ref{lem:coeff-bridge}. Firstly, applying $T$ to all $2(d+1)$ input coefficients. This costs $\softO(dn)$ operations over $\mathbb{K}$. Note that $\omega_{\mathbb{K}}\ge 2$, we have $\softO(dn) \le \softO(d^{\omega_\mathbb{K} - 1}n)$. For each $1\le t\le s$, we invoke Theorem~\ref{src:CY} to compute $P_tQ_t$ by $\softO(d^{\omega_{\mathbb{K}}-1}n)$ operations. Applying $B$ to the $2d+1$ output coefficient vectors only costs $\softO(dn)$ operations. Thus, the total cost is $\softO(d^{\omega_{\mathbb{K}}-1}n)$.

\smallskip
\noindent\textbf{Case II: $n/3\le d<n/2$.}
Set $D=2d$. Then $D < n$ and Theorem \ref{src:CLB}\ref{src:CLB:item1} implies 
\[
C_{\mathbb{K}}(\mu_d) = \softO(D^{\omega_{\mathbb{K}}-2}n^2)=\softO(d^{\omega_{\mathbb{K}}-2}n^2) = \softO( d^{\omega_{\mathbb{K}}-1}n ),
\]
where the last equality follows from the assumption that $d\ge n/3$.

\smallskip
\noindent\textbf{Case III: $n/2\le d<n$.}
We view $\mathbb{L}[x;\sigma]_{\le d}$ as a subspace of $\mathbb{L}[x;\sigma]_{\le n}$. Since $n \le 2d$, Theorem \ref{src:CLB}\ref{src:CLB:item2} yields
\[
C_{\mathbb{K}}(\mu_d) \le C_{\mathbb{K}}(\mu_n) = \softO(n^{\omega_{\mathbb{K}}}) = \softO( d^{\omega_{\mathbb{K}}-1} n). \qedhere
\]
\end{proof}

\end{document}